\newcounter{mnotecount}
\newcommand{\mnotex}[1]
{\protect{\stepcounter{mnotecount}}$^{\mbox{\footnotesize $\bullet$\themnotecount}}$ 
\marginpar{
\raggedright\tiny\em
$\!\!\!\!\!\!\,\bullet$\themnotecount: #1} }
\newcommand{\bm}[1]{\mbox{\boldmath $#1$}}
\DeclareFontFamily{OT1}{rsfs}{}
\DeclareFontShape{OT1}{rsfs}{m}{n}{ <-7> rsfs5 <7-10> rsfs7 <10->
rsfs10}{} \DeclareMathAlphabet{\mycal}{OT1}{rsfs}{m}{n}
\def\defi{:=}
\def\Id{\mbox{\bm{Id}}}
\def\Kend{\mbox{\bm{K}}}
\def\gE{g_E}
\def\cnull{W}
\def\T{T}
\def\Ten{\mathcal{T}}
\def\D{D}
\def\C{\bm{C}}
\def\grad{\mbox{grad}}
\def\VS{V_S}
\newtheorem{Tma}{Theorem}
\newtheorem{Prop}{Proposition}
\newtheorem{Lem}{Lemma}
\theoremstyle{remark} 
\def\M{{\cal M}}
\def\Sbar{\overline{S}}
\def\Cod{D \, \sffuno}
\def\Hess{\mbox{Hess}}
\def\tr{\mbox{tr}}
\def \Mcal {(\mathcal{M}^{1,n+1},\eta)}
\def \grmetuno {\gamma}
\def \grmetdos {\overline{\gamma}}
\def \sffuno {K}
\def \sffdos {\overline{K}}
\def \gruno {S}
\def \grdos {\overline{S}}
\def \vecuno {X}
\def \vecdos {\overline{X}}
\def \vecunoy {Y}
\def \vecdosy {\overline{Y}}
\def \cuno {c}
\def \cdos {\overline{c}}
\def \nuno {\nu}
\def \ndos {\overline{\nu}}
\def \tvecdos {\widetilde{X}}
\def \tvecdosy {\widetilde{Y}}
\def \tndos{\widetilde{\nu}}
\def \del {\bm{\delta}}
\def \nabgen {\nabla^{\mathcal{M}}}
\def \W {W}
\def \metdos {\gamma}
\def \mettres {\overline{\gamma}}
\def \supdos {S}
\def \suptres {\overline{S}}
\def \smink {\bm{\mathcal{S}}}
\def \paral {\mathcal{T}_{p\rightarrow \psi(p)}}
\def \paralinv {\mathcal{T}_{\psi(p)\rightarrow p}}
\def\JournalPrep#1#2#3{#1, ``#2'', #3.}
\def\Journal#1#2#3#4#5#6{#1, ``#2'', {\em #3} {\bf #4}, #5 (#6).}
\def\CQG{\em Class. Quantum Grav.}
\def\JPA{\em J. Phys. A: Math. Gen.}
\def\CMP{\em Commun. Math. Phys.}
\def\PRL{\em Phys. Rev. Lett.}
\def\ANYAS{\em Ann. N. Y. Acad. Sci.}
\def\AM{\em Ann. Math.}
\begin{document}



\title{
Geometry of normal graphs in Euclidean space and applications
to the Penrose inequality in Minkowski}

\author{Marc Mars$^1$ and Alberto Soria$^2$ \\
Facultad de Ciencias, Universidad de Salamanca,\\
Plaza de la Merced s/n, 37008 Salamanca, Spain \\
$^1$also at Instituto de F\'{\i}sica Fundamental y Matem\'aticas, \\
Universidad de Salamanca,  \, marc@usal.es,  $^2$asoriam@usal.es }

\maketitle


\begin{abstract}
The Penrose inequality in Minkowski is a geometric
inequality relating the total outer
null expansion and the area of closed, connected
and spacelike codimension-two surfaces $\smink$
in the Minkowski spacetime, subject to
an additional convexity assumption. In a recent paper,
Brendle and Wang \cite{BrendleWang2013} 
find a sufficient condition for the validity of 
this Penrose inequality in terms   
of the geometry of the orthogonal projection of $\smink$ onto a constant time hyperplane. In this work, we study the geometry of hypersurfaces
in $n$-dimensional euclidean space which 
are normal graphs over other surfaces  and relate the intrinsic and
extrinsic geometry of the graph with that of the base hypersurface.
These results are used to rewrite Brendle and Wang's condition 
explicitly in terms of the time height function of $\smink$
over a hyperplane
and the geometry of the projection of $\smink$ along its past null cone
onto this hyperplane.     
We also include, in an Appendix, a self-contained 
summary of known and new results
on the geometry of projections along the Killing direction  
of codimension two-spacelike surfaces in 
a strictly static spacetime.
\end{abstract}

\section{Introduction}

The {\it Penrose inequality in Minkowski} refers to a geometric inequality
for a class of co\-di\-men\-sion-two spacelike surfaces $\smink$ embedded in the
$(n+2)$-dimensional Minkowski spacetime $(\M^{1,n+1}, \eta)$. The surfaces
are restricted to be closed, connected, orientable and spacetime convex in the sense
that their second fundamental form along one of its future directed null
normals is  non-positive\footnote{Our sign conventions are such that the 
second fundamental form of sphere in $\mathbb{R}^3$ with respect to the outer
normal is positive definite.}.
It follows  \cite{MarsSoria2012} that this property
can only happen for one such 
future null direction. Indeed, if both future null second fundamental forms
were non-positive then the mean curvature vector $H$ of the surface
would be future causal and not-identically vanishing (because closed codimension-two surfaces in Minkowski cannot
be totally geodesic). Thus, $\smink$ would be future trapped and not minimal,
which cannot occur, e.g. by the results in \cite{MarsSenovilla2003}. The (unique)
future directed null normal for which the second fundamental form is non-positive will be denoted by $k$ (note that this field is defined up to an arbitrary
positive scaling) and referred as the future directed {\it inner} null normal.
The future directed  {\it outer} null normal $\ell$ is defined by the conditions of
being null, orthogonal to $\smink$, future directed and satisfying $\langle k,
\ell \rangle_{\eta} = -2$, where $\langle \cdot, \cdot \rangle_{\eta}$ is the
scalar product with the Minkowski metric $\eta$. Then the Penrose
inequality in Minkowski can 
be written as 
\begin{eqnarray}
\int_{\smink} - \langle H, \ell \rangle_{\eta} \langle k,\xi \rangle_{\eta} \bm{\eta_{\mathcal{S}}}
\geq n ( \omega_n)^{\frac{1}{n}} |\smink|^{\frac{n-1}{n}}, \label{PI}
\end{eqnarray}
where $\omega_n$ is the total area of a the unit $n$-sphere, 
$\bm{\eta_{\mathcal{S}}}$ is the induced measure on $\smink$, $|\smink|$ its total area
and $\xi$ is any choice of future directed
unit timelike Killing vector in Minkowski (referred
from now on as a time translation). Note that there is not just
one Penrose inequality in Minkowski, but one for
each choice of time translation $\xi$. When a distinction is necessary, we will refer to "the Penrose inequality with respect to $\xi$".

The physical motivation for this inequality comes from a construction
due to Penrose \cite{Penrose1973}
where an incoming null shell of dust matter propagates in
the Minkowski spacetime along the null geodesics with tangent vector $k$. The null 
hypersurface $\Omega$ they sweep is smooth all the way from $\smink$ to past
null infinity  
(here is where the condition of spacetime convexity becomes important).
Then, the standard Penrose inequality relating total (Bondi) energy and area
of trapped surfaces can be rewritten in terms of Minkowskian 
quantities only. Details on the construction can be found, for instance, in
\cite{Penrose1973,Mars2009,Tod1985,Tod1998} .

 Despite its apparent simplicity, inequality (\ref{PI}) is still
a difficult open problem.
It was proved by Gibbons \cite{Gibbons1997} in the case of surfaces
lying on the hyperplane orthogonal to $\xi$, where it becomes precisely the classic 
Minkowski inequality for convex surfaces in Euclidean space. For surfaces
lying in the past null cone of a point (denoted as ``spherical case'',
although obviously the surface $\smink$ is not in general spherically symmetric),
the inequality was proved by Tod \cite{Tod1985} in spacetime dimension four by using suitable 
Sobolev inequalities and extended to arbitrary 
spacetime dimension (bigger than three) in \cite{MarsSoria2012} as a consequence
of the Beckner inequality for spheres \cite{Beckner1993}. In fact, the 
spherical case can also be viewed
as a particular case of a Penrose inequality for spacetimes admitting
shear-free null hypersurfaces extending from the trapped surface to past null infinity proved by Sauter \cite{Sauter2008}. 

In spacetime dimension four, the inequality (\ref{PI})
has been established for a large class of surfaces
\cite{MarsSoria2012} using a  
geodesic flow of surfaces along $\Omega$ starting on $\smink$ and
adapting and extending  previous ideas of 
Ludvigsen and Vickers \cite{LudvigsenVickers1983} and Bergqvist \cite{Bergqvist1997}.
Wang \cite{Wang2012} has proved the inequality for
surfaces lying on a spacelike hyperboloid of Minkowski with the properties of being mean convex and star-shaped with respect to the point of tangency of the
hyperboloid with the foliation by constant time hyperplanes orthogonal
to $\xi$. Very recently Brendle and Wang \cite{BrendleWang2013} 
have proved
the inequality for another large class of
surfaces, namely those lying on a timelike cylindrical hypersurface
with generator $\xi$ and base a convex surface in a constant time
hyperplane orthogonal to $\xi$. These cylinders are called
{\it convex static timelike hypersurfaces} in \cite{BrendleWang2013}. 
In fact, the case analyzed by the authors refers to a generalization
of inequality (\ref{PI}) conjectured for the Schwarzschild
spacetime, but the argument applies to the Minkowski situation.
The main idea behind their result consists in performing a projection of 
$\smink$ along the time translation $\xi$ onto a constant time hyperplane
$\Sigma_{t_0}$. 
By relating the geometry of $\smink$  to the geometry of the projected surface
$\Sbar$  on $\Sigma_{t_0}$,
inequality (\ref{PI}) becomes a consequence of the standard Minkowski
inequality in Euclidean space provided $\Sbar$ is convex.
More precisely, Brendle and Wang prove the following
result (the result is established in \cite{BrendleWang2013} in spacetime
dimension four, but the argument is in fact dimensional independent):


\begin{Tma}[S. Brendle $\&$ M.T. Wang]
\label{ThWang}
Let $\Mcal$ be the (n+2)-dimensional Min\-kow\-ski spacetime with $t$ a Minkowskian time defining a unit Killing $\pmb{\xi}=-dt$. 
Let $\smink$ be a closed, connected, orientable and
spacetime convex surface  in $\Mcal$ 
with contravariant metric $\gamma^{-1}$.
Let $\pi: \M^{1,n+1} \rightarrow \Sigma_{t_0}$ be
the orthogonal projection onto the hyperplane $\Sigma_{t_0}=\{t=t_0\}$
and define $\suptres = \pi(\smink)$. Denote by
$\bm{\eta_{\suptres}}$ its volume form and by $\overline{K}$ 
its second fundamental form as a hypersurface
of $(n+1)$-Euclidean space
with respect to the outer unit normal.
Then the Penrose inequality with respect to $\xi$ for $\smink$ is equivalent to
\begin{eqnarray}
& &\int_{\suptres} \tr_{d \pi (\gamma^{-1})}
\overline{K} \,  \bm{\eta_{\suptres}}\geq n(\omega_n)^\frac{1}{n}|\smink|^\frac{n-1}{n}
\end{eqnarray}
and  holds if $\overline{S}$ is convex. 
\end{Tma}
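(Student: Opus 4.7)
The idea is to push the Penrose integrand on $\smink$ down to $\suptres$, show it equals the Euclidean trace of $\overline{K}$ plus an exact divergence on $\suptres$, and then invoke the classical Minkowski inequality for convex surfaces in Euclidean space. Since $\smink$ is spacelike, $\xi$ is nowhere tangent to it, and---consistently with the hypothesis that $\suptres=\pi(\smink)$ is a smooth embedded hypersurface of $\Sigma_{t_0}$---I would assume $\pi|_{\smink}\!:\smink\to\suptres$ is a global diffeomorphism. Then $\smink=\{(t_0+f(x),x):x\in\suptres\}$ for a smooth height function $f:\suptres\to\mathbb{R}$, tangent vectors lift as $X+df(X)\,\xi$, and one immediately obtains $\gamma=\mettres-df\otimes df$, $\bm{\eta_{\mathcal{S}}}=\sqrt{1-|\nabla^{\suptres}f|^{2}}\,\bm{\eta_{\suptres}}$, and (by Sherman--Morrison)
\[
d\pi(\gamma^{-1})=(\mettres)^{-1}+\frac{\nabla^{\suptres}f\otimes\nabla^{\suptres}f}{1-|\nabla^{\suptres}f|^{2}}.
\]

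Next I would write the two future-directed null normals explicitly,
\[
\ell=\lambda_{\ell}\bigl(\xi+\nabla^{\suptres}f+\sqrt{1-|\nabla^{\suptres}f|^{2}}\,\ndos\bigr),\qquad
k=\lambda_{k}\bigl(\xi+\nabla^{\suptres}f-\sqrt{1-|\nabla^{\suptres}f|^{2}}\,\ndos\bigr),
\]
with the single constraint $\lambda_{k}\lambda_{\ell}(1-|\nabla^{\suptres}f|^{2})=1$ forced by $\langle k,\ell\rangle_{\eta}=-2$. This gives $\langle k,\xi\rangle_{\eta}=-\lambda_{k}$, and makes the Penrose integrand manifestly invariant under the null rescaling $k\mapsto\mu k$, $\ell\mapsto\mu^{-1}\ell$. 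Differentiating the lifted frame in the flat Minkowski connection (using $\nabla^{\eta}\xi=0$) and applying the Euclidean Gauss formula on $\suptres$, a direct computation then yields the pointwise identity
\[
-\langle H,\ell\rangle_{\eta}\,\langle k,\xi\rangle_{\eta}\,\bm{\eta_{\mathcal{S}}}
\;=\;\Bigl[\tr_{d\pi(\gamma^{-1})}\overline{K}\;+\;\mathrm{div}^{\suptres}\!\Bigl(\tfrac{\nabla^{\suptres}f}{\sqrt{1-|\nabla^{\suptres}f|^{2}}}\Bigr)\Bigr]\bm{\eta_{\suptres}}.
\]
The algebraic heart of this step is the identity $\frac{1}{\sqrt{1-|\nabla^{\suptres}f|^{2}}}\tr_{d\pi(\gamma^{-1})}\Hess^{\suptres}f=\mathrm{div}^{\suptres}(\nabla^{\suptres}f/\sqrt{1-|\nabla^{\suptres}f|^{2}})$, which follows from the product rule applied to the explicit Sherman--Morrison form of $d\pi(\gamma^{-1})$. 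Since $\suptres$ is closed, the divergence term integrates to zero by Stokes' theorem, establishing the claimed equivalence between the Penrose inequality for $\smink$ and $\int_{\suptres}\tr_{d\pi(\gamma^{-1})}\overline{K}\,\bm{\eta_{\suptres}}\geq n\omega_{n}^{1/n}|\smink|^{(n-1)/n}$.

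For the second assertion, convexity of $\suptres$ gives $\overline{K}\geq 0$ as a bilinear form on $\suptres$, and the Sherman--Morrison formula above shows $d\pi(\gamma^{-1})\geq(\mettres)^{-1}$ as contravariant tensors, so pointwise $\tr_{d\pi(\gamma^{-1})}\overline{K}\geq H_{\suptres}:=\tr_{(\mettres)^{-1}}\overline{K}$. The classical Minkowski inequality for convex closed hypersurfaces of Euclidean space, $\int_{\suptres}H_{\suptres}\,\bm{\eta_{\suptres}}\geq n\,\omega_{n}^{1/n}|\suptres|^{(n-1)/n}$, combined with $|\suptres|\geq|\smink|$ (immediate from $\bm{\eta_{\mathcal{S}}}=\sqrt{1-|\nabla^{\suptres}f|^{2}}\,\bm{\eta_{\suptres}}$), then closes the argument. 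The main obstacle throughout is the third step: one must carefully manage the $\sqrt{1-|\nabla f|^{2}}$ weights and the non-canonical scalings $\lambda_{k},\lambda_{\ell}$ in order to recognise that the Hessian contribution rearranges into an honest divergence on $\suptres$. This is the gauge-invariant content of an otherwise opaque calculation, and if the bookkeeping is done carelessly the Hessian term looks like an unpleasant bulk obstruction rather than a total divergence.
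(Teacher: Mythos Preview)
Your argument is correct. The paper does not actually give its own proof of this theorem---it is stated with attribution to Brendle--Wang~\cite{BrendleWang2013}---but the Appendix (Proposition~\ref{proj}) contains precisely the identities underlying your computation. Specializing that proposition to Minkowski ($V\equiv 1$) yields $K^{\overline{\nu}}=\pi^{*}(\overline{K})$ and $K^{u}=W^{-1}\pi^{*}(\Hess_{\overline{\gamma}}\tau)$ in the orthonormal normal frame $\{\overline{\nu},u\}$; writing $\ell=f^{-1}(\overline{\nu}+u)$, $k=f(-\overline{\nu}+u)$ and using $\langle u,\xi\rangle=-W^{-1}$ then gives
\[
-\langle H,\ell\rangle_{\eta}\,\langle k,\xi\rangle_{\eta}\,\bm{\eta_{\mathcal{S}}}
=\Bigl[\tr_{d\pi(\gamma^{-1})}\overline{K}+\tfrac{1}{W}\tr_{d\pi(\gamma^{-1})}\Hess_{\overline{\gamma}}\tau\Bigr]\bm{\eta_{\overline{S}}},
\]
which is exactly your pointwise identity once the second term is recognised (via your Sherman--Morrison computation) as $\mathrm{div}^{\overline{S}}\bigl(W^{-1}\nabla^{\overline{S}}\tau\bigr)$. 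The remaining steps---integrating out the divergence, the pointwise bound $\tr_{d\pi(\gamma^{-1})}\overline{K}\ge H_{\overline{S}}$ from $d\pi(\gamma^{-1})\ge(\overline{\gamma})^{-1}$ and $\overline{K}\ge 0$, the classical Minkowski inequality, and $|\overline{S}|\ge|\smink|$---are the Brendle--Wang argument verbatim. So your proof and the one implicit in the paper's Appendix are the same computation, yours phrased via an explicit graph parametrisation and theirs via the abstract static-projection formulas.
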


\vspace{5mm}

The aim of this paper is to analyze this case in further detail by
writing out the condition of $\Sbar$ being convex explicitly in terms of the
height function of $\smink$ (defined below) and the geometry of the 
convex surface $S$ obtained by intersecting the null hypersurface $\Omega$
ruled by the null geodesics starting at on $\smink$ with tangent vector $-k$ and
the hyperplane $\Sigma_{t_0}$. This requires analyzing the geometry
of $\Sbar$ as a graph over $S$. Although a purely Euclidean
calculation we have not been able to find the result in the literature
and most of our work consists in relating the induced metric and second
fundamental forms of $\Sbar$ to those of $S$. We devote Section \ref{SECglobalconstruction}
to present these calculations and the consequences they have on the
Penrose inequality on convex static timelike hypersurfaces. Our main result 
is Theorem \ref{T0}, where the explicit differential inequality that
the height function of $\smink$ needs to satisfy for a surface $\smink$ to 
lie on a convex static timelike hypersurface is obtained.

Our  second aim in this paper consists in presenting in a concise and unified
manner the geometric relationship between the geometry of the codimension-two surface 
$\smink$ and its projection $\Sbar$,
which lies at the core
of the argument by Brendle and Wang. Some of these results have already
appeared in several places in the literature, but in a somewhat scattered
manner and not in a completely  exhaustive form.
Given the potential usefulness for such ``vertical'' projection in other
areas of physics and geometry, we believe it to be convenient to 
present all the results in a unified and complete manner. We do this for an
arbitrary strictly static spacetime in the Appendix.

\section{Geometry of normal graphs on hypersurfaces of $\mathbb{E}^{n+1}$}
\label{SECGRAPH}

The following conventions and notation are used: if $S$ is any embedded 
spacelike submanifold in a semi-Riemannian manifold, we denote by
$\gamma$ and $D$ the induced metric and corresponding  covariant derivative.
The second fundamental form and mean curvature vectors 
are $\vec{K}(X,Y)\defi -(\nabla_{X}Y)^\bot$ and
$H \defi \mbox{tr}_{\gamma} K$, where $X,Y$ are tangent vector field to $S$ and
$\bot$ denotes the normal component to $S$. 
If $\nu$ is a vector field
orthogonal to $S$, the extrinsic curvature
along $\nu$ is $K^{\nu}(X,Y)\defi \langle \nu,\vec{K}(X,Y)\rangle$.
All manifolds and tensors are assumed to be smooth.

In this section, the ambient manifold is 
the $(n+1)$-Euclidean space $(\mathbb{E}^{n+1},\gE)$, $n \geq 2$. 
The flat connection is denoted by $\nabla$ and the corresponding 
(global) parallel transport by 
$\mathcal{T}_{p_1\rightarrow p_2}:T_{p_1}\mathbb{E}^{n+1} \longrightarrow 
T_{p_2}\mathbb{E}^{n+1}$, for any  $p_1,p_2\in\mathbb{E}^{n+1}$. Obviously,
in Cartesian
coordinates  $\{x^\alpha\}$, ($\alpha=1,\dots,n+1$) this map simply preserves
the coefficients of any vector $V \in T_{p_1} \mathbb{E}^{n+1}$ 
in the basis $\{\partial_{x^{\alpha}}\}$.  $\{ x^{\alpha} \}$ will always refer
to a Cartersian coordinate system. 

Consider two embedded submanifolds $\supdos$ and $\suptres$ in
$\mathbb{E}^{n+1}$ and assume there is diffeomorphism
$\psi: \supdos \longrightarrow \suptres$.
The following result relates the tangential
covariant derivative of vector fields along $\supdos$ (not necessarily tangent to $\supdos$) with the corresponding parallely transported vector field on
$\suptres$. This result will play an important role
below.
\begin{Lem}
\label{lemrelcampostransportados}
Let $\supdos$, $\suptres$ and $\psi$ as above. Let $Z$ be a vector field along $\supdos$. Consider $\vecuno$ a vector field tangent to $\supdos$
and define $\mathcal{T}Z |_{\psi(p)} \defi\mathcal{T}_{p\rightarrow\psi(p)}Z_p$ $\forall p\in \supdos$. Then 
\begin{equation}
\label{paralleldifferenciation}
\paral(\nabla_{\vecuno} Z|_{p})=(\nabla_{d\psi(\vecuno)}\mathcal{T} Z)|_{\psi(p)}, 
\end{equation} 
\end{Lem}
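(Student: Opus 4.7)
The plan is to exploit a global Cartesian coordinate system $\{x^\alpha\}$ on $\mathbb{E}^{n+1}$, in which both the flat connection and the parallel transport take their simplest possible form. Specifically, any vector field $V=V^\alpha\partial_{x^\alpha}$ satisfies $\nabla_{W}V = W(V^\alpha)\partial_{x^\alpha}$ for an arbitrary tangent vector $W$ at the base point, while $\mathcal{T}_{p_1\to p_2}$ literally preserves the Cartesian components, sending $V^\alpha\partial_{x^\alpha}|_{p_1}$ to $V^\alpha\partial_{x^\alpha}|_{p_2}$. In these coordinates the identity (\ref{paralleldifferenciation}) reduces to a one-line chain rule computation, so the whole strategy is to expand both sides in the basis $\{\partial_{x^\alpha}|_{\psi(p)}\}$ and match components.

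Concretely, write the given field as $Z=Z^\alpha\partial_{x^\alpha}$ with $Z^\alpha\in C^\infty(\supdos)$. The definition $\mathcal{T}Z|_{\psi(p)}\defi \mathcal{T}_{p\to\psi(p)}Z_p$ then yields the Cartesian components $(\mathcal{T}Z)^\alpha = Z^\alpha\circ\psi^{-1}$ as a vector field on $\suptres$. The left-hand side of (\ref{paralleldifferenciation}) expands as
$$
\paral\bigl(\vecuno(Z^\alpha)|_p\,\partial_{x^\alpha}|_p\bigr) = \vecuno(Z^\alpha)|_p\,\partial_{x^\alpha}|_{\psi(p)} ,
$$
while the right-hand side becomes
$$
\bigl(d\psi(\vecuno)\bigr)\bigl(Z^\alpha\circ\psi^{-1}\bigr)\big|_{\psi(p)}\,\partial_{x^\alpha}|_{\psi(p)} .
$$
By the very definition of the push-forward, $\bigl(d\psi(\vecuno)\bigr)(f)\big|_{\psi(p)} = \vecuno(f\circ\psi)\big|_p$ for any $f\in C^\infty(\suptres)$; applying this with $f=Z^\alpha\circ\psi^{-1}$ and using $\psi^{-1}\circ\psi=\mathrm{id}_{\supdos}$ collapses the right-hand side to $\vecuno(Z^\alpha)|_p\,\partial_{x^\alpha}|_{\psi(p)}$, which matches the left-hand side component by component.

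I do not anticipate any real obstacle once Cartesian coordinates are introduced; the whole content of the lemma is the triviality of parallel transport in such a frame. The only subtlety worth flagging is that the components $Z^\alpha$ are defined only on $\supdos$, so their directional derivative along $\vecuno$ only makes sense because $\vecuno$ is tangent to $\supdos$ by hypothesis; symmetrically, $d\psi(\vecuno)$ is automatically tangent to $\suptres$ because $\psi$ is a diffeomorphism, so the right-hand side is well defined as well. Both observations are built into the statement, so once they are noted the argument goes through immediately.
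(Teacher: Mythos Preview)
Your proof is correct and follows essentially the same approach as the paper: both arguments work in Cartesian coordinates, expand each side in the basis $\{\partial_{x^\alpha}|_{\psi(p)}\}$, identify $(\mathcal{T}Z)^\alpha = Z^\alpha\circ\psi^{-1}$, and then match components via the chain rule for the push-forward. The only addition in your write-up is the explicit remark about why the directional derivatives of $Z^\alpha$ along $\vecuno$ and $d\psi(\vecuno)$ are well defined, which is a helpful clarification but not a departure from the paper's strategy.
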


\begin{proof}
The left-hand side of (\ref{paralleldifferenciation}) is
\begin{equation}
\label{leftside}
\paral(\nabla_{\vecuno}Z|_p)=\paral\left(\vecuno(Z^\alpha)\partial_{x^\alpha}|_p\right)=\vecuno(Z^\alpha) |_{p} \partial_{x^\alpha}|_{\psi(p)}. 
\end{equation}
On the other hand, on $\suptres$ we have
$\mathcal{T}Z=Z^{*\alpha}\partial_{x^\alpha}$, where $Z^{*\alpha}=
Z^{\alpha}\circ\psi^{-1}$. 
Viewing $Z^{\alpha}$ as scalar functions 
we can also write $Z^{*\alpha} = (\psi^{-1})^{\star} (Z^{\alpha})$.
Its covariant derivative along $d\psi|_{p}(\vecuno)$ is
\begin{align}
\label{rightside}
(\nabla_{d\psi(\vecuno)}\mathcal{T} Z)|_{\psi(p)} & =\nabla_{d\psi(\vecuno)}
\left(Z^{*\alpha}\partial_{x^\alpha}\right)|_{\psi(p)}=
d\psi(\vecuno) ( (\psi^{-1})^{\star} (Z^{\alpha})) \partial_{x^\alpha}|_{\psi(p)} 
& = \vecuno(Z^\alpha) |_p \partial_{x^\alpha}|_{\psi(p)}
\end{align}
which is the same as (\ref{leftside}). \end{proof}

Assume now that $\gruno$ is an orientable hypersurface and select
a unit normal vector field $\nuno$. Choose a smooth
function $\sigma : \gruno \longrightarrow \mathbb{R}$ and consider
the set of points at signed distance $\sigma$ from each $p \in \gruno
\subset \mathbb{E}^{n+1}$
along the normal $\nuno(p)$. The congruence of normal geodesics
to $\gruno$ meets no focal points for distances 
$\sigma$ satisfying the bound
\begin{equation}
\label{sigmabound}
|\sigma| < \frac{1}{\underset{1\leq A\leq n}{\max} \{|\kappa_A |\}},
\end{equation}
where 
$\{ \kappa_A \} $ are the principal curvatures of $\gruno$. Assuming this
bound from now on, we have that the map
$\psi' : \gruno \rightarrow \mathbb{E}^{n+1}$ defined by
\begin{eqnarray}
\psi'(p) = p + \sigma(p) \nuno(p), \label{psi'}
\end{eqnarray}
(where we are obviously using the affine structure of 
$\mathbb{E}^{n+1}$) is such that $\grdos \defi \psi'(S)$ 
is an embedded hypersurface of Euclidean space, and, in fact, a graph
over $\gruno$.  Our aim is to relate the induced metrics
and second fundamental forms of $\gruno$ and $\grdos$.

It is clear that the restriction of $\psi'$ onto its image
is a diffeomorphism between $\gruno$ and $\grdos$, which will be denoted by
$\psi$. Let $\vecuno \in \mathfrak{X} (\gruno)$ be a vector 
field tangent to $\gruno$ 
and define $\vecdos \defi d\psi (\vecuno)$, which is obviously tangent
to $\grdos$.

For the purposes of this section, it is convenient to
transport paralelly $\vecdos$ from $\psi(p)$ to $p$ because this will
allow us to perform all calculations in a single manifold.
Thus, let us define the vector field $\tvecdos \in \mathfrak{X} (\gruno)$
as $\tvecdos |_p \defi \paralinv (\vecdos |_{\psi(p)} )$. The first aim is to
relate  $\tvecdos$ with $\vecuno$. Consider any
curve $\cuno(s)$ in $\gruno$ passing through $p \in \gruno$ with tangent
vector $\vecuno |_p$. From the definition of $\psi$, the curve
$\cdos \defi \psi \circ \cuno$ has tangent vector at $\psi(p)$
given by $\paral (\vecuno + \sigma\nabla_{\vecuno} \nuno |_p+ X(\sigma) \nuno )$.
Recalling that the Weingarten map $\Kend : T_p \gruno \longrightarrow 
T_p \gruno$ is defined by
$\Kend (\vecuno) \defi \nabla_{\vecuno} \nuno$ we conclude
\begin{equation}
\label{relattan}
\tvecdos|_p=(\vecuno+\sigma \nabla_{\vecuno}\nuno + \vecuno(\sigma) \nuno)|_{p}
=\left (\Id+\sigma\Kend + d\sigma \otimes \nuno \right ) (\vecuno) )|_p.
\end{equation}
From the geometric construction of $\grdos$ it is intuitively clear that 
the normal vector $\ndos$ orthogonal to $\grdos$ must satisfy
$\gE ( \tndos|_p, \nuno|_p ) \neq 0$ for all $p \in \gruno$, where 
$\tndos|_p \defi \paralinv (\ndos |_{\psi(p)})$. For a rigorous
proof we use (\ref{relattan}) as follows.
Given that $\paral$ is an isometry, (\ref{relattan}) implies
the following identity, valid for any $\vecuno \in T_p \gruno$:
\begin{align}
0 & = \gE ( \ndos, \vecdos) |_{\psi(p)} =
\gE ( \paral (\tndos), \paral(\tvecdos))|_{\psi(p)} =
\gE ( \tndos , \tvecdos ) |_p = \nonumber \\
&  =\gE ( \tndos , (\Id+ \sigma \Kend) (\vecuno)  )|_p
 +d \sigma (\vecuno) \gE (\tndos, \nuno) |_p . \label{prod}
\end{align}
Assume there is $ p \in \gruno$ such that $\tndos|_p \in T_p \gruno$
(i.e. $\gE(\tndos,\nuno)=0$).
Then $\gE ( \tndos , (\Id+ \sigma \Kend) (\vecuno)  )|_p =0$
for any $\vecuno \in T_p \gruno$, which is a contradiction
with the fact that the bound (\ref{sigmabound})  
implies that the endomorphism 
$\Id + \sigma \Kend$ is invertible. 

Let us choose the orientation
of $\ndos$ so that $\cnull \defi  \gE ( \tndos,\nuno) > 0$ on $\gruno$.
Thus, we can decompose $\tndos = \cnull (\nuno - \T)$ on $\gruno$,
where $\T \in \mathfrak{X} (\gruno)$ is a tangent vector field.
Equation (\ref{prod})  implies
\begin{equation}
\label{coeftan}
\T =  ( \Id + \sigma \Kend )^{-1} ( \grad_{\,\gamma}(\sigma) ),
\end{equation}
where $\grad_{\,\gamma}(\sigma)$ is the gradient of $\sigma$ with respect
to the induced metric
$\grmetuno$.  For notational simplicity, define the invertible endomorphism 
$\C  \defi (\Id + \sigma \Kend)$  so that 
$\T = \C^{-1} (\grad_{\,\gamma}(\sigma))$. 
The condition of $\ndos$ being unit fixes $\cnull$
to satisfy $\cnull^2 (1 + \grmetuno(\T,\T)) = 1$, which, given our choice
of normal in $\grdos$, implies
\begin{equation}
\label{coefnorm}
\cnull= \frac{1}{\sqrt{ 1 + \grmetuno(\T,\T)}}.
\end{equation}
We are ready to prove our main result of this section, which 
relates the geometry of the graph $\grdos$ with the geometry of its base 
$\gruno$.
\begin{Tma}
\label{tmarelcurvaturas}
Consider the hypersurfaces $\gruno$, $\grdos$ of Euclidean
space $(\mathbb{E}^{n+1},\gE)$ with signed distance function
$\sigma$ and diffeomorphism
$\psi$, as above. The respective induced metrics
$\grmetuno$ and $\grmetdos$ and second fundamental forms
$\sffuno$ and $\sffdos$ with respect to the  normals $\nuno$ and $\ndos$ are related by
\begin{align}
\label{metricenunc}
 \psi^{\star} (\grmetdos) & =\grmetuno +2\sigma\sffuno 
+ \sigma^2 \sffuno \circ \sffuno + d \sigma \otimes d \sigma, 
\\
\frac{1}{\cnull} \psi^{\star} (\sffdos) & = 
\sffuno + \sigma \sffuno \circ \sffuno + \sigma \Cod (\cdot,
\T, \cdot )  + d \sigma \otimes \sffuno(\T, \cdot ) 
+ \sffuno(\T, \cdot ) \otimes d \sigma - 
\Hess_{\,\gamma}(\sigma), \label{sffrelation}
\end{align}
where $\T$ and $\cnull$ are defined in (\ref{coeftan})-(\ref{coefnorm}),
$\sffuno \circ \sffuno$ is the trace
of $\sffuno\otimes \sffuno$ in the second and third indices,
$\D$ is the Levi-Civita derivative of $\grmetuno$ and $\Hess_{\,\gamma}(\sigma)$ is the 
Hessian of $\sigma$ in this metric.
\end{Tma}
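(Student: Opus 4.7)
The plan is to push everything back to the base $\gruno$ via the inverse parallel transport $\paralinv$, so that all computations take place on $\gruno$ and can be carried out by direct expansion. Three ingredients will do all the work: equation (\ref{relattan}), which gives $\tvecdos = \C(\vecuno) + d\sigma(\vecuno)\nuno$; the identity $\C \T = \grad_{\grmetuno}\sigma$ from (\ref{coeftan}); and the fact that $\Kend$ (and hence $\C$) is $\grmetuno$-selfadjoint, since $\sffuno(\vecuno,\vecunoy) = \grmetuno(\Kend\vecuno,\vecunoy)$ is symmetric. I will also need the Gauss splitting $\nabla_\vecuno V = \D_\vecuno V - \sffuno(\vecuno,V)\nuno$ for any $V$ tangent to $\gruno$, which follows from the sign convention $\vec{K}(X,Y) = -(\nabla_X Y)^\bot$ fixed in the paper.

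The metric identity (\ref{metricenunc}) is then immediate: since $\paralinv$ is an isometry,
\begin{equation*}
(\psi^\star\grmetdos)(\vecuno,\vecunoy) = \gE(d\psi(\vecuno),d\psi(\vecunoy)) = \gE(\tvecdos,\tvecdosy),
\end{equation*}
and expanding via (\ref{relattan}), using $\gE(\C\vecuno,\nuno)=0$ and the symmetry of $\sffuno$, produces exactly the four terms on the right of (\ref{metricenunc}). For the second fundamental form, note that $\tndos = \cnull(\nuno - \T)$ by construction, so Lemma \ref{lemrelcampostransportados} applied to $Z = \tndos$ gives $\nabla_{\vecdos}\ndos|_{\psi(p)} = \paral(\nabla_{\vecuno}[\cnull(\nuno - \T)])$. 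Using $\sffdos(\vecdos,\vecdosy) = \gE(\nabla_{\vecdos}\ndos,\vecdosy)$ (a consequence of $K(X,Y)=\langle\nabla_X\nu,Y\rangle$) and again that $\paralinv$ is an isometry,
\begin{equation*}
\sffdos(\vecdos,\vecdosy) = \gE\bigl(\nabla_{\vecuno}[\cnull(\nuno - \T)], \tvecdosy\bigr).
\end{equation*}
Expanding the right side with Leibniz, $\nabla_{\vecuno}\nuno = \Kend\vecuno$, and the Gauss splitting yields four contributions. The $\vecuno(\cnull)$ piece drops out, because $\grmetuno(\T,\C\vecunoy) = \grmetuno(\C\T,\vecunoy) = d\sigma(\vecunoy)$ cancels the $d\sigma(\vecunoy)$ coming from $\gE(\nuno,\tvecdosy)$. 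The $\cnull\Kend\vecuno$ piece immediately produces $\cnull[\sffuno(\vecuno,\vecunoy) + \sigma(\sffuno\circ\sffuno)(\vecuno,\vecunoy)]$, and the normal $\cnull\,\sffuno(\vecuno,\T)\nuno$ contribution from $\nabla_{\vecuno}\T$ yields the mixed term $\cnull\,\sffuno(\vecuno,\T)\,d\sigma(\vecunoy)$.

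The substantive step is handling the remaining term $-\cnull\,\grmetuno(\D_{\vecuno}\T,\C\vecunoy)$. To convert it into the $\Hess_{\grmetuno}\sigma$, $\sigma\Cod(\cdot,\T,\cdot)$, and $d\sigma\otimes\sffuno(\T,\cdot)$ pieces of (\ref{sffrelation}), I differentiate the identity $\C\T = \grad_{\grmetuno}\sigma$ along $\vecuno$:
\begin{equation*}
\C(\D_{\vecuno}\T) = \D_{\vecuno}\grad_{\grmetuno}\sigma - \vecuno(\sigma)\Kend\T - \sigma(\D_{\vecuno}\Kend)(\T),
\end{equation*}
and pair with $\vecunoy$. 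Selfadjointness of $\C$ moves it to the $\D_{\vecuno}\T$ slot, and metric compatibility of $\D$ converts $\grmetuno((\D_{\vecuno}\Kend)\T,\vecunoy)$ into $(\D_{\vecuno}\sffuno)(\T,\vecunoy) = \Cod(\vecuno,\T,\vecunoy)$, giving
\begin{equation*}
\grmetuno(\D_{\vecuno}\T,\C\vecunoy) = \Hess_{\grmetuno}\sigma(\vecuno,\vecunoy) - d\sigma(\vecuno)\sffuno(\T,\vecunoy) - \sigma\Cod(\vecuno,\T,\vecunoy).
\end{equation*}
Substituting back reassembles (\ref{sffrelation}). The main obstacle is sign tracking in the Gauss formula together with the combinatorics of the mixed terms; a useful consistency check along the way is that the final tensor must be symmetric in $\vecuno,\vecunoy$, which is guaranteed by the symmetry of $\sffuno$ and the total symmetry of $\Cod=\D\sffuno$ in flat ambient space (Codazzi), and which forces the $d\sigma \otimes \sffuno(\T,\cdot)$ and $\sffuno(\T,\cdot) \otimes d\sigma$ contributions to appear as a symmetrized pair.
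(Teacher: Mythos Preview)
Your proof is correct and follows essentially the same route as the paper: pull everything back to $\gruno$ via parallel transport, use Lemma~\ref{lemrelcampostransportados} to identify $\psi^\star(\sffdos)(\vecuno,\vecunoy)$ with $\gE(\nabla_\vecuno\tndos,\tvecdosy)$, and expand. The two places where your write-up differs from the paper are cosmetic: you verify $\gE(\nuno-\T,\tvecdosy)=0$ by the selfadjointness computation $\grmetuno(\T,\C\vecunoy)=\grmetuno(\C\T,\vecunoy)=d\sigma(\vecunoy)$, whereas the paper simply invokes $\gE(\tndos,\tvecdosy)=0$ from the orthogonality of $\ndos$ and $\vecdosy$ transported back; and for the $\D_\vecuno\T$ term you differentiate $\C\T=\grad_{\grmetuno}\sigma$, while the paper differentiates $\T=\C^{-1}\grad_{\grmetuno}\sigma$ and uses $-\C\circ(\D_\vecuno\C^{-1})=(\D_\vecuno\C)\circ\C^{-1}$ --- the same algebraic content, with your version arguably a bit cleaner since it avoids introducing $\C^{-1}$ at all.
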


\noindent {\bf Remark.} These expressions reduce to well-known results
when either $\sigma$ is constant or when the base surface is a hyperplane.

\vspace{5mm}

\noindent {\bf Remark.} It is interesting that
the symmetry of $\sffdos$ for any $\sigma$ is equivalent to 
the Codazzi identity $\Cod (\vecuno_1,\cdot , \vecuno_3) = 
\Cod (\vecuno_3,\cdot,\vecuno_1)$ for $\gruno$. So, properties
of normal graphs can be used to derive curvature identities
on the base hypersurface, which  usually would require different methods.

\vspace{5mm}

\begin{proof}
Let $\vecuno, \vecunoy 
\in \mathfrak{X}(\gruno)$ be arbitrary tangent vector fields.
We start with (\ref{metricenunc}). 
With the notation above, and using that
the parallel transport is an isometry:
\begin{align*}
\psi^{\star} (\grmetdos) (\vecuno,\vecunoy) |_{\psi(p)} & = 
\grmetdos \left (d \psi|_p (\vecuno), d\psi|_p (\vecunoy) \right )
= \gE (\vecdos,\vecdosy)|_{\psi(p)} = 
\gE (\tvecdos,\tvecdosy) |_p = 
\gE (\C (\vecuno), \C(\vecunoy)) |_p + \\
& + d\sigma \otimes d\sigma|_p  
(\vecuno, \vecunoy) 
= \grmetuno ( (\Id + \sigma \Kend) (\vecuno),
(\Id + \sigma \Kend) (\vecunoy))|_p + d\sigma \otimes d\sigma|_p  
(\vecuno, \vecunoy) 
\end{align*}
where in the fourth equality we used (\ref{relattan}). 
This establishes (\ref{metricenunc}).
To prove (\ref{sffrelation}) we first apply Lemma 
\ref{lemrelcampostransportados} to find the identity
\begin{align}
\gE (\nabla_{\vecuno}\tndos, \tvecdosy  ) |_{p} & 
=\gE \left ( \paral  (\nabla_{\vecuno}\tndos),
\paral ( \tvecdosy ) \right ) |_{\psi(p)} =
\gE \left ( \nabla_{\vecdos}\ndos ,\vecdosy \right ) |_{\psi(p)} = \nonumber \\
& = \sffdos (\vecdos,\vecdosy) |_{\psi(p)} = 
\psi^{\star} (\sffdos) (\vecuno,\vecunoy) |_p. \label{inter}
\end{align}
To evaluate the left-hand side we recall the fundamental identity,
$\nabla_{\vecuno}  \vecunoy = \D_{\vecuno} \vecunoy - \sffuno (\vecuno,\vecunoy) \nuno$,  valid for any pair of tangential vector fields.
Since $\gE(\tndos,\tvecdosy)=0$, the left-hand side of (\ref{inter}) becomes
\begin{align}
\gE (\nabla_{\vecuno}\tndos, \tvecdosy  ) & = 
\frac{\vecuno (\cnull)}{\cnull}  \gE(\tndos, \tvecdosy  )  
+ \cnull \gE \left ( \nabla_{\vecuno}  ( \nuno - \T ), d\sigma (\vecunoy)
\nuno + \C (\vecunoy) \right ) = \nonumber \\
& = \cnull \grmetuno \left ( \Kend(\vecuno) - \D_{\vecuno} \T, \C(\vecunoy) \right )
+ \cnull \sffuno(\vecuno,\T ) d \sigma (\vecunoy).  \label{inter2}
\end{align}
The first term is immediately
$\cnull \gamma( \Kend(\vecuno), \C(\vecunoy) ) =
\cnull ( \sffuno + \sigma \sffuno \circ \sffuno ) (\vecuno,
\vecunoy)$.
To elaborate the second term, we use that
the endomorphism $\C$ is symmetric with
respect to $\grmetuno$, i.e. $\grmetuno(\vecuno_1,\C (\vecuno_2))=
\grmetuno(\C(\vecuno_1),\vecuno_2)$. Thus,
\begin{align*}
- \grmetuno ( \D_{\vecuno} \T,\C(\vecunoy)) & =
- \grmetuno ( ( \C \circ \D_{\vecuno} \C^{-1}) ( \grad_{\,\gamma} (\sigma) ), \vecunoy )
-  \grmetuno ( \D_{\vecuno} \grad_{\,\gamma} (\sigma), \vecunoy)  \\
& =  \grmetuno((\D_{\vecuno} \C) (\T), \vecunoy) -
\Hess_{\,\gamma}(\sigma) (\vecuno,\vecunoy)  \\
& = d \sigma (\vecuno) \sffuno (\T,\vecunoy) 
+ \sigma \Cod(\vecuno,\T,\vecunoy) -
\Hess_{\,\gamma}(\sigma) (\vecuno,\vecunoy), 
\end{align*}
where in the first equality we used $(\ref{coeftan})$ and in the second equality $ - \C \circ (\D_{\vecuno} \C^{-1})
=   (\D_{\vecuno} \C) \circ \C^{-1}$ .
Inserting this into (\ref{inter2}) yields the result.
\end{proof}

\noindent {\bf Remark.} The Riemannian character of the ambient 
Euclidean space
has only been used when evaluating $\gE (\nuno,\nuno)$ and $\gE(\ndos,\ndos)$.
With the same arguments as before, let $\gruno$ be an embedded submanifold of
the Minkowski spacetime $(\M^{1,n+1},\eta)$ with non-degenerate induced
metric $\grmetuno$ and unit normal $\nuno$ satisfying
$\langle \nuno,\nuno \rangle_{\eta} = \epsilon$ with 
$\epsilon = \pm 1$. 
$\grdos$ is constructed
as before, where the orientation of the unit normal $\ndos$ is selected so that it satisfies $\langle \tndos,\nuno\rangle_{\eta}= \epsilon \cnull$, with $W>0$.  
Under these conditions:
\begin{align}
\label{metricenunc_bis}
 \psi^{\star} (\grmetdos) & =\grmetuno +2\sigma\sffuno 
+ \sigma^2 \sffuno \circ \sffuno + \epsilon d \sigma \otimes d \sigma, 
\\
\frac{1}{\cnull} \psi^{\star} (\sffdos) & = 
\sffuno + \sigma \sffuno \circ \sffuno + \sigma \Cod (\cdot,
\T, \cdot )  + d \sigma \otimes \sffuno(\T, \cdot ) 
+ \sffuno(\T, \cdot ) \otimes d \sigma -\epsilon 
\Hess_{\,\gamma}(\sigma), \label{sffrelation_bis}
\end{align}
where all definitions are as before and the decomposition $\tndos=\cnull(\nuno-\T)$ still holds, but this time $\T$ reads $\T=\epsilon ( \Id + \sigma \Kend )^{-1} ( \grad_{\,\gamma}(\sigma) )$ and $W$ is 
\begin{equation*}
\cnull = \frac{1}{\sqrt{1 + \epsilon \grmetuno(\T,\T)}}.
\end{equation*}
 The condition $1 + \epsilon \grmetuno(\T,\T) > 0$ is necessary 
for $\grdos$ to be of the same causal character as $\gruno$.

\section{Matching two different projections}
\label{SECglobalconstruction}

Let $\Mcal$ be the $(n+2)$-dimensional Minkowski spacetime ($n \geq 2$). 
Choose a Minkowskian time $t$ so that we can define a unit Killing $\pmb{\xi}=-dt$. The constant time hyperplanes $\{t = t_0\}$ will be denoted by $\Sigma_{t_0}$. Let the codimension-two surface $\smink$ and its normal null frame $\{k,\ell\}$ be as in the Introduction. 
The convex surface 
$S \hookrightarrow \Sigma_{t_0}$ defines uniquely a null hypersurface $\Omega$ (defined {\it spacetime convex null hypersurface} in \cite{MarsSoria2012})  and, then, any spacelike
surface embedded in $\Omega$ is defined uniquely by the time height function over $\Sigma_{t_0}$, namely the function $\tau \defi 
t|_{\smink} - t_0$. This function is defined on $\smink$.
However, there is a natural diffeomorphism that maps $\smink$ to $S$ via null
geodesics in $\Omega$ tangent to $k$, so that any geometric information can be
transferred from $\smink$ onto $S$ and viceversa. This applies to any
scalar function $f$ and in particular to
$\tau$. We
will use indistinctly the same name for both functions, the precise
meaning being clear from the context.  For any closed {\it spacetime
convex} surface $\smink$, $\suptres$ must be embedded in
$\Sigma_{t_0}$ (otherwise  two different points of $\smink$ with different time heights would project the same point onto $\Sigma_{t_0}$ which is impossible
given that they lie on a smooth null hypersurface).
We can apply Theorem \ref{tmarelcurvaturas} to relate the geometry of $S$ and $\suptres$ as follows:
\begin{figure}[!htb]
\begin{center}
\psfrag{sig}{$\Sigma_{t_0}=\{t=t_0\}$}
\psfrag{min}{$\Mcal$}
\psfrag{om}{$\Omega$}
\psfrag{kill}{$\xi$}
\psfrag{s}{$\smink$}
\psfrag{s2}{$S$}
\psfrag{s3}{$\overline{S}$}
\psfrag{t1}{$\tau$}
\psfrag{t2}{$\tau$}
\psfrag{m}{$\nu$}
\psfrag{q}{$q$}
\psfrag{k}{$k$}
\psfrag{l}{$\ell$}
\psfrag{n}{$\overline{\nu}$}
\includegraphics[width=14cm]{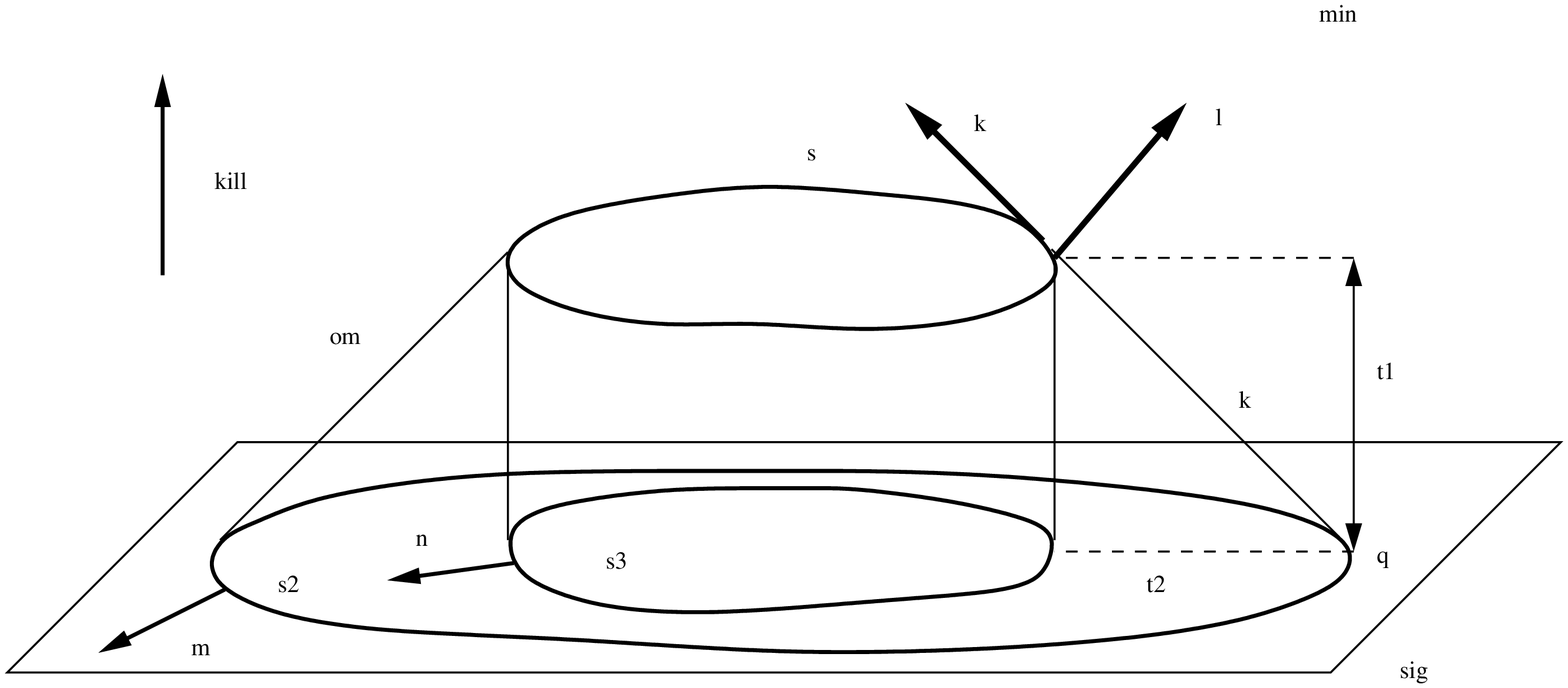}	
\end{center}
\caption{Schematic figure combining both projections: the spacetime
convex surface $\smink$ is projected along $\Omega$ onto 
$\Sigma_{t_0}$, with $S=\Omega\cap\Sigma_{t_0}$. $\suptres$ is obtained by projecting $\smink$ along the Killing $\xi$. $\{k,\ell\}$ are normalized so that $\langle k,\ell\rangle_{\eta}=-2.$ 
 }
\label{fig3}
\end{figure}

\begin{Tma}[\bf Sufficient condition for the Penrose inequality in Minkowski in terms of spacetime convex geometry]
\label{T0}
Let $\Mcal$ be the (n+2)-dimensional Minkowski spacetime with $t$ a Minkowskian time defining a unit Killing $\pmb{\xi}=-dt$. Let $\smink$ be a
closed, connected, orientable 
and spacetime convex surface in $\Mcal$ and $\Omega$ the 
convex null hypersurface containing $\smink$. Consider $S\defi\Omega\cap\Sigma_{t_0}$ and let $K$ be its second fundamental form as an euclidean surface of $\Sigma_{t_0}$ with respect to its outer unit normal $\nu$ (see Figure \ref{fig3}), $D$ the Levi-Civita connection of the metric $\metdos$ of $\supdos$, and $\grad_{\,\gamma}(\tau)$ and $\Hess_{\,\gamma}(\tau)$ the gradient and Hessian of $\tau$ in the metric $\gamma$ respectively. Let $\tau \defi t |_{\smink} - t_0$ be defined as before. If the tensor 
\begin{equation}
\label{sffrelationth}
\Ten=\sffuno - \tau \sffuno \circ \sffuno - \tau \Cod (\cdot,
\T, \cdot )  - d \tau \otimes \sffuno(\T, \cdot ) 
- \sffuno(\T, \cdot ) \otimes d \tau + 
\Hess_{\,\gamma}(\tau) 
\end{equation}
is positive semidefinite, where $\T = -( \Id - \tau \Kend )^{-1}(\grad_{\,\gamma}(\tau))$,
then the Penrose inequality with respect to $\xi$ holds for $\smink$.
\end{Tma}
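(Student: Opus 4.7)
The strategy is to recognize the projected surface $\Sbar$ as a normal graph over the convex base $\supdos=\Omega\cap\Sigma_{t_0}$ inside the Euclidean hyperplane $\Sigma_{t_0}$, apply Theorem \ref{tmarelcurvaturas} to rewrite its second fundamental form in terms of the geometry of $\supdos$ and the height $\tau$, and then invoke Theorem \ref{ThWang}. In this way the tensor $\Ten$ of (\ref{sffrelationth}) will be, up to the positive factor $\cnull$, precisely the second fundamental form of $\Sbar$ pulled back to $\supdos$.

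First I would set up the null generator parametrization of $\Omega$. Normalize $k$ so that its time component in Cartesian coordinates equals $+1$. At each $q\in \supdos$, the nullity of $k$ together with its orthogonality to $\supdos$ inside $\Omega$ force $k|_q = \partial_t \pm \nu(q)$, where $\nu$ is the outer unit normal of $\supdos$ in $\Sigma_{t_0}$. The spacetime-convex character of $\smink$ — i.e.\ the inner-null character of $k$ — selects the minus sign, since it is equivalent to the past-directed generators of $\Omega$ flaring outward from $\smink$. Consequently
\begin{equation*}
p(q) = q + \tau(q)\bigl(\partial_t - \nu(q)\bigr) \qquad (q\in\supdos),
\end{equation*}
and its $\xi$-projection onto $\Sigma_{t_0}$ is $\pi(p(q)) = q - \tau(q)\nu(q)$. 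Comparing with (\ref{psi'}) identifies $\Sbar = \pi(\smink)$ as the normal graph over $\supdos$ with signed distance function $\sigma = -\tau$; the bound (\ref{sigmabound}) is automatic from the absence of caustics in $\Omega$ on the interval from $\supdos$ to $\smink$.

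Substituting $\sigma = -\tau$ into Theorem \ref{tmarelcurvaturas}, the tangent vector (\ref{coeftan}) becomes exactly $\T = -(\Id - \tau\Kend)^{-1}(\grad_{\,\gamma}(\tau))$, and (\ref{sffrelation}), after collecting signs, becomes
\begin{equation*}
\frac{1}{\cnull}\,\psi^{\star}(\sffdos) = \Ten,
\end{equation*}
where $\sffdos$ is the second fundamental form of $\Sbar$ with respect to its outer unit normal in $\Sigma_{t_0}$. Since $\cnull > 0$ by (\ref{coefnorm}), positive semidefiniteness of $\Ten$ is equivalent to $\sffdos \ge 0$, i.e.\ to the convexity of $\Sbar$. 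Theorem \ref{ThWang} then yields the Penrose inequality with respect to $\xi$ for $\smink$.

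The only genuinely delicate step is the orientation bookkeeping in the first paragraph: one must verify both that the inner-null character of $k$ forces the spatial projection of $\smink$ to retreat inside $\supdos$ (rather than outside it), and that the outer unit normal to $\Sbar$ used in Theorem \ref{ThWang} agrees with the $\ndos$ produced by the construction in Section \ref{SECGRAPH} under the choice $\sigma = -\tau$. Once these sign conventions are pinned down, the proof reduces to a direct substitution in (\ref{sffrelation}).
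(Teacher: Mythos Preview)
Your proposal is correct and follows essentially the same route as the paper: identify $\Sbar$ as the normal graph over $\supdos$ with signed distance $\sigma=-\tau$, apply Theorem~\ref{tmarelcurvaturas} to obtain $\Ten=\frac{1}{\cnull}\psi^{\star}(\sffdos)$, and conclude via Theorem~\ref{ThWang}. You supply more explicit detail on the null-generator parametrization and the orientation bookkeeping than the paper does, but the argument is the same.
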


\begin{proof}
Observe that in the euclidean hyperplane $\Sigma_{t_0}$ we can obtain $\suptres$ as a graph over $\supdos$ moving inwards along the inner normal to $\supdos$. Indeed, let $\nuno$ and $\ndos$ be the outer unit normals of $\supdos$ and $\suptres$. 
Moving along geodesics tangent to $k$ in the past null cone $\Omega$
 a time height $\tau$ with respect to $\Sigma_{t_0}$ is equivalent to the projected trajectory moving inwards the same signed distance $\tau$ (see Figure \ref{fig3}). Thus, we can apply Theorem \ref{tmarelcurvaturas} with $\sigma=-\tau$ and conclude that $\Ten=\frac{1}{\cnull}\psi^*(\overline{K})$ with $\cnull>0$. The validity of the Penrose inequality for $\smink$ is then a consequence of Theorem \ref{ThWang}.    
  \end{proof}

To get a flavour of the range of applicability of this result, let us consider a
few examples. Consider a closed, axially symmetric surface $\supdos$
in a spacelike hyperplane $\Sigma_{t_0}$
of four-dimensional Minkowski spacetime $\mathcal{M}^{1,3}$, and assume
that this surface is a cylinder between two parallel planes $z=z_0$
and $z=z_1$ orthogonal
to the axis of symmetry. Let $\rho_0$ be the radius of the cylinder. In
cylindrical coordinates $\{ \varphi,z\}$, (\ref{sffrelationth})
becomes, in the region $z_0 \leq  z \leq  z_1$,
\begin{equation}
\label{cilindefpos}
\mathcal{T}_{AB}=(\rho_0-\tau)\del^\varphi_A\del^\varphi_B+\tau_{,AB}+\frac{\tau_{,\varphi}}{\rho_0-\tau}(\tau_{,A}\del^\varphi_B+\tau_{,B}\del^\varphi_A), 
\end{equation} 
with $\del$ the Kronecker delta. Assuming $\tau$ also axially symmetric, then
$\mathcal{T}$ is positive semidefinite if and only if $\tau_{,zz}\geq 0$. 
So, any smooth axially symmetric surface $\smink$ projecting to $S$ along the past null cone and for which $\tau$
is a constant $\tau_1$ on $z \geq z_1$, a constant $\tau_0$ on $z \leq z_0$
and fulfills $\tau_{,zz} \geq 0$ on $z\in [z_0,z_1]$,
 satisfies the Penrose inequality 
(with respect to the time translation orthogonal to the hyperplane 
$\Sigma_{t_0}$).

Another simple example is obtained when $\supdos$ is a sphere of radius 
$r_0$ in $\Sigma_{t_0}$. In spherical coordinates 
$\{\theta,\varphi\}$ (we are again in four spacetime dimensions)
non-negativity of the tensor $\mathcal{T}$ reads
\begin{equation*}
\mathcal{T}_{AB}=(r_0-\tau)\metdos_{AB}+D_A D_B \tau
+\frac{2}{r_0-\tau}\tau_{,A}\tau_{,B} \geq 0
\end{equation*}
which, in the case that $\smink$ is axially symmetric, becomes (after
adapting the spherical coordinates so that $\tau(\theta)$)
\begin{equation}
\label{convexgraphsphere}
(r_0-\tau)^2+(r_0-\tau)\tau_{,\theta\theta}+2(\tau_{,\theta})^2 \geq 0 
\quad \quad (r_0-\tau)\sin\theta+\cos\theta \tau_{,\theta}\geq 0.   
\end{equation}
Let us solve these inequalities in the strictly convex
case (i.e.
with strict inequalities in (\ref{convexgraphsphere})). With the definition 
$z(\theta)\defi(r_0 - \tau(\theta))
\cos \theta$, the second inequality becomes $z_{,\theta} < 0$, 
which can be inverted to define $\theta(z)$. With the definition
$\rho (z) \defi (r_0- \tau(\theta)) \sin \theta |_{\theta(z)}$, the first
inequality becomes, after a straightforward computation, $\rho_{,zz} < 0$.
Note also that $\rho - z \rho_{,z} = - \frac{(r_0 - \tau )^2}{z_{,\theta}} > 0$
as a consequence of their definitions. Conversely, let $\rho(z)$ satisfy
$\rho_{,zz} < 0$ and $\rho - z \rho_{,z} > 0$. Define
$z(\theta)$ by $\cos \theta = z (\sqrt{z^2 + \rho(z)^2})^{-1}|_{z=z (\theta)}$ 
(the condition $\rho - z \rho_{,z} > 0$ is used here)
and construct a function $\tau(\theta)$ be means of 
$\tau = r_0 - \sqrt{z^2 + \rho(z)^2} |_{z=z(\theta)}$. Then the surface
$\smink$ defined by this time height over the sphere $\supdos$
satisfies the Penrose inequality. 

We note that the Penrose inequality for surfaces
$\smink$ lying in the past null cone of a point in the Minkowski
spacetime has been established in full generality in \cite{Tod1985}
(for dimension 4)
and \cite{MarsSoria2012} (in any dimension). So, the second
example above does not
extend in any way the class of surfaces for which the inequality holds.
However, besides giving us an idea of the proportion of surfaces
in the null cone case covered by Theorem \ref{ThWang}, it also provides
a method to construct a wide family of axially symmetric surfaces $\smink$
for which the Penrose inequality holds. Indeed, assume now that $\supdos$
is axially symmetric and consider axially symmetric
functions $\tau$ on $\supdos$ so that
$\suptres$ is strictly convex. Let $e_z$ be
the unit field tangent to the axis of symmetry and 
$e_{\rho}$ the unit field radially outward from the axis of symmetry.
Define  the two functions on $\supdos$
\begin{equation}
\label{eqesfera}
z (p) \defi\gE(x-\tau \nu,e_z) |_p, \quad \quad \rho(p) \defi \gE(x-\tau \nu,e_\rho) |_p,
\end{equation}
where $x$ is the position vector of a point $p$ on $\supdos$ 
and $\nu$ the outward normal at $p$. The strict inequality ${\mathcal T}> 0$
is equivalent to (i) $z$ being a coordinate on $\supdos$ away from points
on the axis of symmetry and (ii) $\rho(z)$ satisfying
$\rho_{,zz} < 0$. Conversely, given any
function  $\rho(z)$ satisfying $\rho_{,zz} < 0$, if there are two maps $z,\tau :
S \rightarrow \mathbb{R}$ solving the algebraic equations (\ref{eqesfera})
with $\rho(p) \defi \rho(z(p))$, then the
spacetime surface $\smink$ defined by this time height function over $S$
satisfies the Penrose inequality. The algebraic equations will be solvable 
provided the parametric surface $\{ \rho(z),z,\varphi\}$ in cylindrical coordinates
is a normal graph over $S$. It is obvious that this is not always the case,
so restrictions are necessary. In the spherical case above, this restriction
is precisely $\rho - z \rho_z >0$.

As already mentioned,
the first case where the Penrose inequality in Minkowski was proved
is due to Gibbons \cite{Gibbons1997}, who considered convex surfaces $\smink$
lying on a spacelike hyperplane and established the Penrose inequality with respect
to the Killing orthogonal to the hyperplane. This case is immediately
covered by Theorem \ref{ThWang}.  In fact, this theorem
also implies the validity of the Penrose inequality for $\smink$
with respect to {\it any} other time translation, as we show next.
\begin{Tma}
Let $\smink$ be a closed, connected and convex surface embedded in a spacelike hyperplane
$\Sigma'_{t'_0}\hookrightarrow \mathcal{M}^{1,3}$. Let 
$\xi$ be any unit time translation (not necessarily orthogonal
to $\Sigma'_{t_0}$). Then the Penrose inequality with respect to $\xi$ holds
for  $\smink$.
\end{Tma}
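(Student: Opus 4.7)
The plan is to apply Theorem \ref{ThWang} directly with respect to the given $\xi$. That reduces the problem to verifying two hypotheses about $\smink$: that $\smink$ is spacetime convex, and that its orthogonal projection $\overline{S}=\pi(\smink)$ onto the hyperplane $\Sigma_{t_0}$ orthogonal to $\xi$ is a convex hypersurface of $\Sigma_{t_0}$.

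First I would check spacetime convexity. Let $\xi'$ denote the unit timelike vector normal to $\Sigma'_{t'_0}$ and $\nu$ the outer Euclidean unit normal to $\smink$ inside $\Sigma'_{t'_0}$; together these frame the normal bundle of $\smink$ in $\mathcal{M}^{1,3}$. Because $\Sigma'_{t'_0}$ is a totally geodesic hyperplane of Minkowski, the extrinsic curvature $K^{\xi'}$ of $\smink$ vanishes identically, whereas the Euclidean convexity of $\smink$ gives $K^{\nu}$ positive definite. Consequently the future directed null combination proportional to $\xi'-\nu$ has a negative definite second fundamental form, so it is the inner null normal $k$, and $\smink$ is spacetime convex.

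Second, I would establish convexity of $\overline{S}$ by a purely affine-geometric argument. The kernel of $\pi$, viewed as a linear map of tangent spaces, is $\mathbb{R}\xi$; since $\Sigma'_{t'_0}$ is spacelike it contains no timelike vectors, so $\ker\pi\cap\Sigma'_{t'_0}=0$. A dimension count then shows that $\pi|_{\Sigma'_{t'_0}}:\Sigma'_{t'_0}\to\Sigma_{t_0}$ is an affine isomorphism. Affine isomorphisms are homeomorphisms preserving convexity, so the image under $\pi$ of the convex body bounded by $\smink$ in $\Sigma'_{t'_0}$ is a convex body in $\Sigma_{t_0}$ whose boundary is exactly $\overline{S}$. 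Therefore $\overline{S}$ is a convex hypersurface of $\Sigma_{t_0}$.

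With both hypotheses in hand, Theorem \ref{ThWang} yields the Penrose inequality with respect to $\xi$ for $\smink$, concluding the proof. The argument does not invoke the differential criterion of Theorem \ref{T0}; the only minor subtleties are selecting the correct null normal $k$ among the two future null normals of $\smink$ and the observation that the affine projection between the two spacelike hyperplanes preserves convexity. As a byproduct, Gibbons' Penrose inequality—originally restricted to the time translation orthogonal to the hyperplane supporting $\smink$—extends automatically to an arbitrary unit time translation $\xi$ via the Brendle--Wang projection argument.
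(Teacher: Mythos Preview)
Your argument is correct, but the route you take for the convexity of $\overline{S}$ is genuinely different from the paper's.

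The paper proceeds differentially: it invokes the identity $K^{\overline{\nu}}=\pi^{\star}(\overline{K})$ from Proposition~\ref{proj} (with $V=1$), writes $K^{\overline{\nu}}=\langle \nu',\overline{\nu}\rangle_\eta\, K^{\nu'}$, and then shows $\langle \nu',\overline{\nu}\rangle_\eta>0$ by observing that two unit spacelike vectors in a two-dimensional Lorentzian normal plane can never be orthogonal, combined with a continuity argument deforming $\xi$ from $\xi'$. Your approach bypasses all of this: you note that $\pi|_{\Sigma'_{t'_0}}:\Sigma'_{t'_0}\to\Sigma_{t_0}$ is an affine isomorphism (its kernel $\mathbb{R}\xi$ is timelike and meets the spacelike hyperplane trivially), and affine isomorphisms carry convex bodies to convex bodies, hence $\overline{S}$ is convex. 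This is more elementary and avoids both the Appendix machinery and the deformation step. What the paper's argument buys, on the other hand, is an explicit pointwise control of $\overline{K}$ in terms of $K^{\nu'}$, which fits the differential-geometric framework used elsewhere and would generalise more readily to non-flat static backgrounds, where no affine structure is available.

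Two small points of care: (i) convexity of $\smink$ gives $K^{\nu}$ positive \emph{semidefinite}, not definite, so the null second fundamental form along $\xi'-\nu$ is non-positive rather than negative definite --- this is exactly what spacetime convexity requires; (ii) your affine argument yields that $\overline{S}$ bounds a convex body, and you are implicitly using that for a smooth closed hypersurface this is equivalent to $\overline{K}\geq 0$ with respect to the outer normal, which is standard but worth stating.
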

\begin{proof}
Let $\nu'$ be the outward normal to $\smink$ in $\Sigma'_{t'_0}$.
Since a hyperplane is totally geodesic, the second fundamental form
vector of $\smink$ is $K = K^{\nu'} \nu'$, where $K^{\nu'}$ is 
positive semidefinite. Choose any hyperplane $\Sigma_{t_0}$ orthogonal to
$\xi$ and define $\suptres$ as the orthogonal projection 
of $\smink$ onto $\Sigma_{t_0}$. To prove the theorem it suffices
to show that $\suptres$ is convex, i.e. that its second fundamental
form $\overline{K}$ with respect to the unit outer normal $\overline{\nu}$
in $\Sigma_{t_0}$ is non-negative. From Proposition \ref{proj} in the
Appendix (with $V=1$, as we are in Minkowski) we have
\begin{eqnarray*}
K^{\overline{\nu}}  = \pi^{\star} (\overline{K}),
\end{eqnarray*}
where $\pi : \smink \rightarrow \suptres$ is the projection along
$\xi$, $\overline{\nu}$ is the parallel extension along $\xi$ of the
normal vector $\overline{\nu}$ of $\suptres$ evaluated on $\smink$
and $K^{\overline{\nu}} \defi \langle K, \overline{\nu} \rangle_{\eta}$.
Thus, $\overline{K}$ is non-negative if and only if 
$\langle \nu', \overline{\nu} \rangle_{\eta}$ 
is non-negative. Now, both
$\nu'$ and $\overline{\nu}$ are normal to $\smink$, spacelike and unit. Since they
belong to a two-dimensional Lorentzian space,
$\langle \nu', \overline{\nu} \rangle_{\eta}$ vanishes nowhere,
and, hence, has constant sign. For the choice $\xi=\xi'$, i.e. the time 
translation normal to $\Sigma'_{t_0}$, we obviously have
$\nu'  = \overline{\nu}$ and the sign is positive. Since $\xi$ can be obtained
from $\xi'$ by a smooth deformation, and $\overline{\nu}$ also changes smoothly,
it is impossible that the sign of $\langle \nu',\overline{\nu} \rangle_{\eta}$
changes from $+1$ to $-1$, and the theorem is proved. 
\end{proof}

This theorem implies a Minkowski type inequality for
$S' \defi \smink$ as a convex surface of Euclidean space. Indeed, 
the Killing vector $\xi$ can be decomposed as 
$\xi = \sqrt{1 + |v|^2} \, \xi' + v$ where $v$
is a translation of Euclidean space $(\mathbb{E}^{3},\gE)$ (identified with
the hyperplane $\Sigma^{'}_{t'_0}$). With the definition of null
vectors $k = \xi' - \nu'$ and $\ell = \xi' + \nu'$ on $S'$ and, given that
the mean curvature vector of $S'$ is  $H' \nu'$, where $H'$ is the mean curvature of
$S' \hookrightarrow \mathbb{E}^3$, the
Penrose inequality (\ref{PI}) with respect to $\xi$ 
becomes
\begin{equation}
\label{Minkgeneral}
\int_{S'} H'f\bm{\eta_{S'}}\geq \sqrt{16\pi|S'|},
\end{equation}
where 
\begin{equation*}
f=\sqrt{1+|v|^2}+\gE(\nu',v ). 
\end{equation*}
Obviously, when $v=0$ we recover the standard Minkowski inequality.
The validity of this inequality suggests that it might be
worth studying for which functions
$f$ Minkowski type inequalities of the form (\ref{Minkgeneral}) hold
for arbitrary convex surfaces of Euclidean space. We note in this respect
different, but somewhat related results in \cite{Miao}.

\section{Appendix}

One of the ingredients in Theorem \ref{ThWang} of Brendle and Wang is 
a computation relating the extrinsic curvatures of a
codimension-two
spacelike surface $S$ embedded in a strictly static spacetime
and its projection $\suptres$ onto
a hypersurface of constant static time.  A similar and more exhaustive
analysis in the case of Minkowski spacetime was carried out
in connection with a new definition of quasi-local mass in
\cite{WangYau1}, \cite{WangYau2}. Results of this type in the general
static case have also appeared in \cite{BrayKhuri}. However, to the best of our knowledge,
no systematic account of the relation between all the intrinsic
and extrinsic geometric properties of $S$ and its projection $\suptres$
has appeared in the literature, neither in the Minkowski nor in the general
strictly static case. We devote this Appendix to doing so. 

Let $(\M,g)$ be an
$(n+2)$-dimensional spacetime with a
Killing vector field $\xi$ which is everywhere timelike and hypersurface
orthogonal. We choose $\xi$ to be future directed.
The covariant derivative of $(\M,g)$ is denoted as $\nabgen$
and $\langle \cdot,\cdot \rangle$ is used for scalar products with $g$. 
The norm $V>0$ of the Killing vector is defined by 
$\langle \xi, \xi \rangle=- V^2$. 

Consider a codimension-two spacelike surface $S$ in $\M$. 
Since all calculations
are local we can assume without loss of generality
that $S$ is embedded,  and that there exists a time
function $t : \M \rightarrow \mathbb{R}$ such that $\bm{\xi} = - V^2 dt$ (this
follows from the integrability of $\bm{\xi}$ and locality). Choose
any $t_0 \in \mathbb{R}$ and let $\Sigma_{t_0} = \{ t = t_0 \}$. The projection
$\suptres$ of $S$ onto $\Sigma_{t_0}$ along the orbits of $\xi$
defines a codimension-two surface which again can be taken to be embedded
(after restricting $S$ if necessary). Thus, we have a diffemorphism
$\pi : S \rightarrow \suptres$ defined by projection along $\xi$.
The induced metrics and covariant derivative on $S$ (resp. $\suptres$)
are denoted as $\gamma$ and $D$ (resp. $\overline{\gamma}$ and 
$\overline{D}$). The function $\tau \defi t|_S - t_0$  and
$\VS \defi V|_S$ will play an important role in relating
the geometry of the two surfaces. As before, scalar functions on $S$ will be transferred
to $\overline{S}$ by means of $\pi$ while keeping their names. The precise
meaning will follow from the context.

For any vector field $X \in \mathfrak{X} (S)$ we denote its projection
$d\pi (X) \in \mathfrak{X} (\overline{S})$ as $\overline{X}$. Given any
such vector $\overline{X}$ 
we extend it along the orbits of
the Killing vector by Lie transport along $\xi$, i.e. solving
$[\xi,\overline{X} ]=0$.  Again we keep the same name for the extension. Note
that $\overline{X}$ is everywhere orthogonal to $\xi$. With these definitions
it is straightforward that, at any $p \in S$,
\begin{equation}
\label{vectrelation}
X|_p=\overline{X}(\tau)\xi|_p+\overline{X}|_p.
\end{equation}
As a consequence,
the metrics $\gamma$ and
$\overline{\gamma}$ are related by
\begin{align*}
\gamma(\vecuno,\vecunoy)|_p & =\langle \overline{\vecuno}(\tau)\xi+\overline{\vecuno}, \overline{\vecunoy}(\tau)\xi+\overline{\vecunoy}\rangle|_p=
\left ( \pi^*(\mettres)
-\VS^2 d\tau\otimes d\tau|_p \right ) (\vecuno,\vecunoy)|_p 
\end{align*}
where we have used  $d \pi|_p (X) = \overline{X} |_{\pi(p)}$
and $\overline{X} (\tau) = d \tau (X)$. So, we conclude
\begin{eqnarray*}
\gamma  = \pi^{\star} (\overline{\gamma}) - \VS^2 d \tau \otimes d \tau.
\label{metrics}
\end{eqnarray*}
The inverse metrics are then related by
\begin{eqnarray*}
\overline{\gamma}^{-1}=d\pi(\gamma^{-1})-
\frac{\VS^2}{\W^2}\grad_{\,\mettres}(\tau)\otimes\grad_{\,\mettres}(\tau),
\quad \quad
\W \defi \sqrt{1-\VS^2|d\tau|^2_{\mettres}}, 
\end{eqnarray*}
which has, as immediate consequences,
\begin{equation}
\label{consecmetricainv}
d\pi (\grad_{\,\gamma}(\tau))=\frac{1}{\W^2}\grad_{\,\mettres}(\tau), 
\quad |d\tau|^2_{\gamma}=\frac{|d\tau|^2_{\mettres}}{\W^2}, 
\quad
 \W=\frac{1}{\sqrt{1+\VS^2|d\tau|^2_{\gamma}}}.
\end{equation}
The bound $1-\VS^2|d\tau|^2_{\mettres}>0$  (necessary for 
$\W$ to be real) is a consequence of
$S$  being spacelike everywhere.
It is also immediate to show that the respective volume forms
$\bm{\eta_{S}}$ and $\bm{\eta_{\suptres}}$ 
are related by
\begin{equation}
\bm{\eta_{S}}=\W \bm{\eta_{\suptres}}.
\end{equation}

In order to study the relation between the extrinsic geometries of $S$ and
$\suptres$ it is useful to choose a basis of the normal bundle
of each surface. Concerning $\suptres$, the natural choice is
$\{ \ndos, \VS^{-1} \xi |_S \}$, where $\ndos$ is a unit normal
of $\suptres$ as a hypersurface in $\Sigma_{t_0}$. We denote by
$\overline{K}$ the second fundamental form of $\suptres$ along
$\ndos$. Concerning $S$, the Lie constant extension $\ndos$ along the Killing 
$\xi$ defines a spacelike and unit normal to $S$, still denoted
by $\ndos$. For the second vector, note that $\xi |_S$ is nowhere 
tangent to $S$ and hence its normal component $\xi^{\bot}$
in the orthogonal decomposition $T_p \M = T_p S \oplus N_pS$ is nowhere
zero and, in fact, timelike. From $\bm{\xi} = - V^2 dt$ we have,
for any $X\in T_p S$,  $\langle \xi|_S,X\rangle=-\VS^2 d\tau(X)$ which means
that the tangential component of $\xi|_S$ is $-\VS^2\grad_{\,\gamma}(\tau)$, or
equivalently $\xi^{\bot} = \xi |_S  + \VS^2 \grad_{\,\gamma}(\tau)$.
Following \cite{WangYau2} we denote 
by $u$ the future directed unit vector tangent to $\xi^{\bot}$. Its
explicit form is
\begin{eqnarray}
u = \frac{\W}{\VS} \left ( \xi|_S + \VS^2 \grad_{\,\gamma} (\tau) \right )
\label{e4}
\end{eqnarray}
as a consequence of $u$ being unit and orthogonal to 
$\grad_{\,\gamma} (\tau)$ and the property $\langle \xi, \xi \rangle = - V^2$.
We note that $\{ \overline{\nu}, u \}$ defines an orthonormal basis 
of the normal bundle of $S$.

The extrinsic geometry of $S$ is encoded into its second fundamental form
vector $K$ and the connection of the normal bundle $\bm{\alpha}$.
For the basis above, this geometric information is in turn given by
the two symmetric tensors $K^{u} \defi \langle K,u \rangle$,
$K^{\ndos} \defi \langle K, \ndos \rangle$ and the one-form
$\bm{\alpha_{\ndos}}(X) \defi \langle\nabgen_{X}\ndos,u\rangle$, $X \in 
\mathfrak{X}(S)$. The following proposition relates
these
objects with the geometry of the projected surface:
\begin{Prop}
\label{proj}
With the notation above,
\begin{eqnarray}
\label{kmtensor}
K^{\ndos}&=&\pi^*(\overline{K})-\VS\ndos(V) |_Sd\tau\otimes d\tau, \\
\label{ke4tensor}
K^{u}&=&\frac{1}{\W}\left(d\VS\otimes d\tau+d\tau\otimes d\VS+\VS\pi^*(\Hess_{\,\mettres}(\tau) ) \right)
-\frac{\VS^2}{\W}d\VS(\grad_{\,\mettres}(\tau) )d\tau\otimes d\tau,  
  \\
\label{conform}
\bm{\alpha_{\ndos}}&=&\frac{1}{\W}\left(
\VS\pi^{\star}(\overline{K}(\grad_{\,\mettres}(\tau),\cdot))-\ndos(V)|_S d\tau\right) \hspace{1cm} \W=\sqrt{1-\VS^2|d\tau|^2_{\mettres} }, 
 \end{eqnarray}  
\end{Prop}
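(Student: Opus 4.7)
The plan is to compute each identity directly from its definition, decomposing every $X\in T_pS$ via $X|_p=\overline X(\tau)\,\xi|_p+\overline X|_p$ as already established in the excerpt. Three recurrent tools drive the argument. First, the Killing identity $\nabla_a\xi_b=V^{-1}(V_a\xi_b-V_b\xi_a)$, which is immediate from $\bm\xi=-V^2\,dt$. Second, the fact that $\xi$ generates isometries: since $[\xi,\overline X]=[\xi,\overline Y]=[\xi,\overline\nu]=0$, any scalar built out of these Lie-transported fields together with $g$ and $\nabla^{\mathcal M}$ is $\xi$-invariant and so equals its value at $\pi(p)\in\overline S$. Third, the immediate consequence of the Killing identity that the slices $\Sigma_t$ are totally geodesic in $\mathcal M$ (because $\langle\xi,X\rangle=\langle\xi,Y\rangle=0$ for $X,Y\in T\Sigma_t$ forces $\langle\xi,\nabla^{\mathcal M}_XY\rangle=0$), so that intrinsic second fundamental forms of hypersurfaces of $\Sigma_t$ can be computed using $\nabla^{\mathcal M}$ directly.

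For \eqref{kmtensor}, I would expand $\nabla^{\mathcal M}_XY$ using $Y=\overline Y(\tau)\xi+\overline Y$ and Leibniz. The $\xi$-term drops against $\overline\nu\perp\xi$; the middle term $\overline Y(\tau)\,\nabla^{\mathcal M}_X\xi$, contracted with $\overline\nu$ by the Killing identity together with $\langle X,\xi\rangle|_p=-V_S^2\,d\tau(X)$, yields exactly $V_S\,\overline\nu(V)|_S\,d\tau\otimes d\tau$. For the remaining $\langle\overline\nu,\nabla^{\mathcal M}_X\overline Y\rangle$, further decompose $X=\overline X(\tau)\xi+\overline X$ and use $\nabla^{\mathcal M}_\xi\overline Y=\nabla^{\mathcal M}_{\overline Y}\xi$; the $\overline X(\tau)$ piece dies because both $\overline Y$ and $\overline\nu$ lie in $\xi^\perp$, while the surviving $\langle\overline\nu,\nabla^{\mathcal M}_{\overline X}\overline Y\rangle$ is $\xi$-invariant and evaluates at $\pi(p)$ to $-\overline K(\overline X,\overline Y)$ by total geodesy of $\Sigma_{t_0}$ and the definition of $\overline K$.

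For \eqref{ke4tensor}, since $\langle u,Y\rangle\equiv 0$ on $S$, we can write $K^u(X,Y)=\langle\nabla^{\mathcal M}_Xu,Y\rangle$. Differentiating $u=\tfrac{W}{V_S}(\xi+V_S^2\grad_\gamma(\tau))$, the piece in which $X$ hits the scalar factor $W/V_S$ is proportional to $u$ and vanishes against $Y$. The Killing identity evaluates $\langle\nabla^{\mathcal M}_X\xi,Y\rangle$ antisymmetrically in $X,Y$; combining this with the symmetric contribution $2W\,dV_S(X)\,d\tau(Y)$ coming from differentiating $V_S^2$ produces $W(dV_S\otimes d\tau+d\tau\otimes dV_S)$, and $\langle\nabla^{\mathcal M}_X\grad_\gamma(\tau),Y\rangle=\Hess_\gamma(\tau)(X,Y)$ identifies the last piece. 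One obtains the intrinsic expression $K^u=W\bigl[dV_S\otimes d\tau+d\tau\otimes dV_S+V_S\Hess_\gamma(\tau)\bigr]$. The final step converts $\Hess_\gamma(\tau)$ into $\pi^*(\Hess_{\overline\gamma}(\tau))$ using $\gamma=\pi^*(\overline\gamma)-V_S^2\,d\tau\otimes d\tau$: one computes the Christoffel difference $\Gamma[\gamma]-\Gamma[\overline\gamma]$, contracts with $d\tau$, and applies the algebraic identity $\gamma^{-1}(dV_S,d\tau)=W^{-2}\,\overline\gamma^{-1}(dV_S,d\tau)$ implicit in \eqref{consecmetricainv} to match coefficients, producing \eqref{ke4tensor}. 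I expect this Hessian conversion to be the main technical hurdle.

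For \eqref{conform}, apply $\bm{\alpha_{\overline\nu}}(X)=\langle\nabla^{\mathcal M}_X\overline\nu,u\rangle$ with $X=\overline X(\tau)\xi+\overline X$. The $\xi$-piece equals $\overline X(\tau)\langle\nabla^{\mathcal M}_{\overline\nu}\xi,u\rangle$ by Lie-invariance of $\overline\nu$; the Killing identity together with the easy identity $\langle\xi,u\rangle=-V_S/W$ (immediate from the definition of $u$) turns it into $-W^{-1}\overline\nu(V)|_S\,d\tau(X)$. For the $\overline X$-piece, total geodesy of $\Sigma_t$ forces $\nabla^{\mathcal M}_{\overline X}\overline\nu$ to lie in $T\Sigma_t$, so only the $\grad_\gamma(\tau)$-component of $u$ contributes; $\xi$-invariance transfers the resulting inner product to $\pi(p)$ where it becomes $V_S\,\overline K(\overline X,\cdot)$, and the identity $d\pi(\grad_\gamma(\tau))=W^{-2}\grad_{\overline\gamma}(\tau)$ from \eqref{consecmetricainv} converts the expression to $\tfrac{V_S}{W}\,\pi^*\bigl(\overline K(\grad_{\overline\gamma}(\tau),\cdot)\bigr)$, completing \eqref{conform}.
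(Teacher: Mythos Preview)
Your proposal is correct and follows essentially the same route as the paper: decompose vectors via $X=\overline X(\tau)\xi+\overline X$, exploit the Killing identity $\nabla_a\xi_b=V^{-1}(V_a\xi_b-V_b\xi_a)$ together with total geodesy of $\Sigma_{t_0}$ and $\xi$-invariance, arrive at the intermediate expression $K^{u}=W\bigl[dV_S\otimes d\tau+d\tau\otimes dV_S+V_S\,\Hess_{\gamma}(\tau)\bigr]$, and then convert the Hessian using the Christoffel difference of $\gamma$ and $\pi^{\star}\overline{\gamma}$. The only differences are tactical (you expand $\nabla_XY$ rather than $\nabla_X\overline\nu$ for \eqref{kmtensor}, you compute $\langle\nabla_Xu,Y\rangle$ directly and let the antisymmetric Killing term cancel against the $2W\,dV_S\,d\tau$ piece instead of symmetrizing first, and you split $X$ rather than $u$ in \eqref{conform}); apart from a harmless sign slip in your description of the ``middle term'' for \eqref{kmtensor}, the argument goes through exactly as in the paper.
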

\begin{proof}
Inserting (\ref{vectrelation}) in the defining expression
$K^{\ndos}(\vecuno,\vecunoy)=\langle\nabgen_{\vecuno}\ndos,\vecunoy\rangle$
gives, after using $\overline{X}(\tau)=d\tau(X)$,
\begin{eqnarray}
K^{\ndos}(\vecuno,\vecunoy)
=d\tau(Y)\langle\nabgen_{X}\ndos,\xi\rangle+d\tau(X)\langle\nabgen_{\xi}\ndos,\overline{Y}\rangle+\langle\nabgen_{\overline{X}}\ndos,\overline{Y}\rangle.
\label{inter1}
\end{eqnarray}
Now, $\langle\nabgen_{X}\ndos,\xi\rangle=
\overline{X}(\tau)\langle\nabgen_{\xi}\overline{\nu},\xi\rangle
+\langle\nabgen_{\overline{X}}\overline{\nu},\xi\rangle
= \overline{X}(\tau)\langle\nabgen_{\xi}\overline{\nu},\xi\rangle$,
the second equality following from  $\Sigma_{t_0}$ being totally geodesic.
To elaborate this further, we note that
$d \bm{\xi} = 2 V^{-1} d V \wedge \bm{\xi}$ as a
consequence of $\bm{\xi} = - V^2 dt$. Hence
\begin{eqnarray}
\nabgen_{\overline{\nu}} \bm{\xi}  = \frac{1}{2} d \bm{\xi}  ( \overline{\nu},
\cdot ) =  \frac{\overline{\nu} (V) |_S}{\VS} \bm{\xi},
\end{eqnarray}
where in the first equality we used the Killing equations
and in the second the orthogonality of $\overline{\nu}$ and 
$\xi$.
Raising indices and recalling that
$[\xi,\overline{\nu}] =0$ we conclude 
\begin{eqnarray}
\label{extnormalkilling}
\nabgen_{\xi} \overline{\nu} = 
\nabgen_{\overline{\nu}} \xi = \frac{\overline{\nu} (V)|_{S}}{\VS} \xi,
\end{eqnarray}
and therefore
\begin{equation}
\label{KNprimerterm}
\langle\nabgen_{X}\ndos,\xi\rangle=-\VS\ndos(V)|_S d\tau(X).
\end{equation}

With these expressions at hand,
the first term in (\ref{inter1})  becomes
$-\VS \ndos(V)|_S (d\tau \otimes d\tau) (X,Y)$,
while the second term vanishes. Finally, the last term
gives the second fundamental form of $\suptres$ and (\ref{kmtensor})
follows (to our knowledge, this identity  appeared for the first time
in \cite{BrendleWang2013}).

Concerning $K^{u}$, its symmetry properties allows us to write
$K^{u}( \vecuno,\vecunoy)= \frac{1}{2} (\langle\nabgen_{\vecuno}u,\vecunoy\rangle
+ \langle\nabgen_{\vecunoy}u,\vecuno\rangle )$, which after inserting 
(\ref{e4}) yields
\begin{align*}
\label{ke4uno}
K^{u}(\vecuno,\vecunoy) = &
\frac{\W}{2\VS}\left(\langle\nabgen_{\vecuno}\xi,\vecunoy\rangle+\langle\nabgen_{\vecunoy}\xi,\vecuno\rangle+\langle\nabgen_{\vecuno}(\VS^2\grad_{\,\gamma}(\tau)),\vecunoy\rangle+\langle\nabgen_{\vecunoy}(\VS^2\grad_{\,\gamma}(\tau)),\vecuno\rangle  \right).
\end{align*}
The Killing equations imply that the first two terms cancel each other.
Expanding the remaining terms  it follows immediately
\begin{equation}
\label{Ke4geoms}
K^{u}=\W \left (d\VS\otimes d\tau+d\tau\otimes d\VS+
\VS\Hess_{\,\gamma}(\tau)\right).
\end{equation}
In order to rewrite this in terms of the projected geometry, 
we need to find the relation between the Hessians of $\tau$
on each one of the surfaces. To that aim, recall
that the difference between connections $D$ and $\overline{D}$
on a given manifold defines a type $(1,2)$ tensor  $\mathcal{Z}$ 
such that the following identity 
holds for any one-form $\bm{\omega}$
(see e.g. \cite{Wald}):
\begin{equation}
\label{difunoformas}
(D\bm{\omega})(X,Y)-(\overline{D}\bm{\omega})(X,Y)=
-\mathcal{Z}(\bm{\omega},X,Y).
\end{equation}
In our context, we can use $\pi^{\star} (\overline{\gamma}) $ on $S$
and the corresponding connection $\overline{D}$ it defines.
Given the relation (\ref{metrics}), a straightforward computation
gives 
\begin{equation*}
\mathcal{Z}(d\tau,\cdot,\cdot)=-\VS|d\tau|^2_\gamma\left( d\VS\otimes d\tau+d\tau\otimes d\VS+\VS\pi^*(\Hess_{\,\mettres}(\tau) ) \right)+\VS d\VS(\grad_{\,\gamma}(\tau))d\tau\otimes d\tau
\end{equation*}
Inserting this into (\ref{difunoformas}) with $\bm{\omega} 
\rightarrow d \tau$ and using (\ref{consecmetricainv}) 
 it follows
\begin{equation}
\label{hessrelation}
\W^2\Hess_{\,\metdos}(\tau)=\pi^*(\Hess_{\,\mettres}(\tau) )+
\VS|d\tau|^2_{\overline{\gamma}}(d\tau\otimes d\VS+d\VS\otimes d\tau)-
\VS d\VS (\grad_{\,\mettres}(\tau) ) d\tau\otimes d\tau. 
\end{equation}
Combining this and (\ref{Ke4geoms}) gives (\ref{ke4tensor}) at once.

It only remains to compute 
the connection 1-form $\alpha_{\ndos}(X)=\langle\nabgen_{X}\ndos,u\rangle$.  
Substituting (\ref{e4}) and recalling that $\overline{\nu}$ is orthogonal
to $u$ one finds
\begin{align*}
 \alpha_{\ndos}(X)& =\frac{\W}{\VS}\langle\nabgen_{X}\ndos,\xi+\VS^2\grad_{\,\gamma}(\tau)\rangle=\frac{\W}{\VS}\langle\nabgen_{X}\ndos,\xi\rangle
+\W \VS K^{\ndos}(\grad_{\,\gamma}(\tau),X) \\
& = 
-\W\ndos(\VS)d\tau(X) +\W \VS K^{\ndos}(\grad_{\,\gamma}(\tau),X),
\end{align*} 
where in the last equality we used (\ref{kmtensor}). Replacing
(\ref{KNprimerterm}) and using the first relation
in (\ref{consecmetricainv}) and the definition of $\W$,
then 
\begin{eqnarray*}
\alpha_{\ndos}(X)&=&-\W\ndos(\VS)d\tau(X)+\W\VS \left(\pi^*(\overline{K})(\grad_{\,\gamma}(\tau),X)-\VS\ndos(V) |_S|d\tau|^2_\gamma d\tau(X)  \right)= \\
&=&\frac{1}{\W}\left(
\VS\overline{K}(\grad_{\,\mettres}(\tau),d\pi(X))-\ndos(V)|_S d\tau(X)\right) 
\end{eqnarray*}
as claimed.

\end{proof}

\noindent {\bf Remark.}  Although we have assumed $\xi$ to be timelike, all
the calculations in the Appendix are similar when $\xi$ is spacelike
and nowhere zero.
In particular the geometric relations between $S$
and its projection $\overline{S}$ in a purely Riemannian context where $\langle\xi,\xi\rangle=V^2$  and $\bm{\xi} = V^2 dt$ are
\begin{eqnarray*} 
\gamma & = &\pi^{\star} (\overline{\gamma}) + \VS^2 d \tau \otimes d \tau, \\
\bm{\eta_{S}}&=&\W \bm{\eta_{\suptres}}  \hspace{1cm} \W=\sqrt{1+\VS^2|d\tau|^2_{\mettres} }, \\
K^{\ndos}&=&\pi^*(\overline{K})+\VS\ndos(V) |_Sd\tau\otimes d\tau, \\
K^{u}&=&-\frac{1}{\W}\left(d\VS\otimes d\tau+d\tau\otimes d\VS+\VS\pi^*(\Hess_{\,\mettres}(\tau) ) \right)
-\frac{\VS^2}{\W}d\VS(\grad_{\,\mettres}(\tau) )d\tau\otimes d\tau,  
  \\
\bm{\alpha_{\ndos}}&=&\frac{1}{\W}\left(
- \VS\pi^{\star}(\overline{K}(\grad_{\,\mettres}(\tau),\cdot))+
\ndos(V)|_S d\tau\right), 
 \end{eqnarray*}  
where this time the unit vector $u$ reads
\begin{eqnarray*}
u = \frac{\W}{\VS} \left ( \xi|_S - \VS^2 \grad_{\,\gamma} (\tau) \right ).
\label{e4}
\end{eqnarray*}

\noindent {\bf Remark.} Note that the expressions above contain all the information
needed to relate any geometric quantity on $S$ with geometric information on 
its projection $\suptres$. For instance, the mean curvature vector
of $S$ can be related to the projected geometry simply taking
the trace in $K = K^{\overline{\nu}} \overline{\nu} - K^{u} u$
with the metric $\gamma^{-1}$ and using 
(\ref{metrics}) together with the results in Proposition \ref{proj}.
Similarly, the null second fundamental forms 
$K^{k}$, $K^{\ell}$ of $S$ along a basis of null 
normals $\{k, \ell\}$ can be obtained directly from Proposition \ref{proj}
after decomposing $\{k,\ell\}$
in the basis $\{\overline{\nu},u\}$. The same applies 
to the corresponding null expansions.  

Concerning the connection one-form, its behaviour under change of basis
is not tensorial (being a connection), so it may be worth
giving its explicit expression in a null-basis $\{k,\ell\}$ of the form
$k=f (- \ndos+ u) $ and $\ell= f^{-1} (\ndos+ u)$ where $f: S \rightarrow
\mathbb{R} \setminus \{0\}$ is smooth. With the usual definition
of connection one-form in this basis given by
$\bm{s}(X)\defi \frac{1}{2}\langle\nabgen_{X}k,\ell\rangle$ we have
\begin{equation*}
\bm{s}(X)=\frac{1}{2}\langle\nabgen_{X}k,\ell\rangle=\frac{1}{2}\langle
\nabgen_{X}(-f\ndos+fu), f^{-1} \ndos+ f^{-1} u\rangle=-\frac{X(f)}{f}-\alpha_{\ndos}(X),
\end{equation*}  
and hence 
\begin{equation*}
\bm{s} = - \frac{df}{f} + \frac{1}{\W} \left ( \overline{\nu} (V)|_{S}
  d\tau - \VS\pi^{\star} (\overline{K}(\mbox{grad}_{\,\mettres}(\tau), \cdot ) )
\right ).
\end{equation*}

\section{Acknowledgments}
Financial support under the projects  FIS2012-30926 (MICINN)
and P09-FQM-4496 (Junta de Andaluc\'{\i}a and FEDER funds)
are acknowledged. A.S. acknowledges the Ph.D. grant AP2009-0063 (MEC).

\end{document}